\newtheorem{definition}{Definition}
\newtheorem{theorem}{Theorem}
\newtheorem{lemma}{Lemma}
\newcommand{\Z}{\mathbb{Z}}
\newcommand{\R}{\mathbb{R}}
\title{\LARGE \bf On the decoding of Barnes-Wall lattices}
\author{Vincent Corlay$^{\dagger,*}$, Joseph J. Boutros$^{\ddagger}$, Philippe Ciblat$^{\dagger}$, and Lo\"ic Brunel$^*$\\
$^{\dagger}$ Telecom Paris,Institut Polytechnique de Paris, 91120 Palaiseau, France, v.corlay@fr.merce.mee.com\\
$^{\ddagger}$ Texas A\&M University, Doha, Qatar, $^*$Mitsubishi Electric R\&D Centre Europe, Rennes, France
}
\begin{document}
\pagenumbering{gobble}

\maketitle
\thispagestyle{plain}
\pagestyle{plain}

\begin{abstract}
We present new efficient recursive decoders for the Barnes-Wall lattices based on their squaring construction. 
The analysis of the new decoders reveals a quasi-quadratic complexity in the lattice dimension and a quasi-linear complexity in the list-size. 
The error rate is shown to be close to the universal lower bound in dimensions 64 and 128. 
\end{abstract}

\section{Introduction}
Barnes-Wall ($BW$) lattices were one of the first series discovered
with an infinitely increasing fundamental coding gain~\cite{Barnes1959}. 
This series includes dense lattices in lower dimensions such as $D_4$, $E_8$, 
$\Lambda_{16}$ \cite{Conway1999}, and is deeply related to Reed-Muller codes~\cite{Forney1988}\cite{MacWilliams1977}: 
$BW$ lattices admit a Construction D based on these codes. 
Multilevel constructions attracted the recent attention of researchers,
mainly Construction C$^*$ \cite{Bollauf2019}, where lattice and non-lattice constellations are made out of binary codes.
One of the important challenges is to develop lattices with a reasonable-complexity decoding where
a fraction of the fundamental coding gain is sacrificed in order to achieve a lower kissing number. 
$BW$ lattices are attractive in this sense. 
For instance the lattice $BW_{128}$, with an equal fundamental coding gain as $Nebe_{72}$ \cite{Nebe2012}, 
sacrifices 1.5 dB of its fundamental coding gain with respect to $MW_{128}$ \cite{Elkies2001} 
while the kissing number is reduced by a factor of 200.

Several algorithms have been proposed to decode $BW$ lattices. 
Forney introduced an efficient maximum-likelihood decoding (MLD) algorithm in~\cite{Forney1988} 
for the low dimension instances of these lattices based on their trellis representation.
Nevertheless, the complexity of this algorithm is exponential in the dimension and intractable
for $n>32$: e.g. decoding in $BW_{64}$ involves $2\cdot2^{24}+2 \cdot 2^{16}$ decoders of $BW_{16}$ and decoding in $BW_{128}$ involves $2\cdot2^{48}+2\cdot 2^{32}$ decoders of $BW_{32}$ (using the two-level squaring construction to build the trellis, see \cite[Section~IV.B]{Forney1988}). 
Later, \cite{Micciancio2008}~proposed the first bounded-distance decoders (BDD) running in polynomial time:
a parallelisable decoder of complexity $O(n^2)$ and another sequential decoder of complexity $O(n\log^2(n))$.
The parallel decoder was generalized in~\cite{Grigorescu2017} to work beyond the packing radius, still in polynomial time. 
It is discussed later in the paper. 
The sequential decoder uses the $BW$ multilevel construction to perform multistage decoding:
each of the $\approx \log(n)$ levels is decoded with a Reed-Muller decoder of complexity $n\log(n)$.
This decoder was also further studied, in \cite{Harsham2013}, 
to design practical schemes for communication over the AWGN channel.
The performance of this sequential decoder is far from MLD.
A simple information-theoretic argument explains 
why multistage decoding of $BW$ lattices cannot be efficient: 
the rates of some component Reed-Muller codes
exceed the channel capacities of the corresponding levels \cite{Forney2000}\cite{Yan2013}.
As a result, no $BW$ decoders, being both practical and quasi-optimal on the Gaussian channel, 
have been designed and executed for dimensions greater than $32$.

We present new decoders for $BW$ lattices based on their $(u, u+v)$ construction \cite{MacWilliams1977}. 
We particularly consider this construction as a squaring construction \cite{Forney1988} to establish
a new recursive BDD (Algorithm~2, Section~III-A), new recursive list decoders (Algorithms~3~and~5, Sections~IV-B
and IV-C), and their complexity analysis as stated by Theorems~2-4. As an example, 
Algorithm~5 decodes $BW_{64}$ and $BW_{128}$ with a performance close to the universal lower bound 
on the coding gain of any lattice and with a reasonable complexity almost quadratic in the lattice dimension. 

\section{preliminaries}
\noindent \textbf{Lattice.} A lattice $\Lambda$ is a discrete additive subgroup of $\R^n$.
For a rank-$n$ lattice in $\R^n$, the rows of a $n\times n$ generator matrix $G$ constitute
a basis of $\Lambda$ and any lattice point $x$ is obtained via $x=zG$, where $z \in \Z^n$.
The squared minimum Euclidean distance of $\Lambda$ is $d(\Lambda)=(2\rho(\Lambda))^2$, 
where $\rho(\Lambda)$ is the packing radius.
The number of lattice points located at a distance $\sqrt{d(\Lambda)}$ 
from the origin is the kissing number $\tau(\Lambda)$.
The fundamental volume of $\Lambda$,
i.e. the volume of its Voronoi cell and its fundamental parallelotope, is denoted by
$\text{vol}(\Lambda)$.
The fundamental coding gain~$\gamma(\Lambda)$ is given by the ratio
$\gamma(\Lambda) = d(\Lambda)/\text{vol}(\Lambda)^{\frac{2}{n}}$.
The squared Euclidean distance between a point $y \in \R^n$ and a lattice point $x \in \Lambda$ 
is denoted $d(x,y)$.
Accordingly, the squared distance between $y \in \R^n$ and the closest lattice point of $\Lambda$
is $d(y,\Lambda)$. \\
For lattices, the transmission rate used with finite constellations
is meaningless. 
Poltyrev introduced the generalized
capacity \cite{Poltyrev1994}, the analog of Shannon
capacity for lattices. The Poltyrev limit corresponds to a noise variance of
$\sigma^2_{max}=\det(\Lambda)^{\frac{2}{n}}/(2 \pi e)$ 
and the point error rate is evaluated 
with respect to the distance to Poltyrev limit, i.e. $\sigma^2_{max}/\sigma^2$. \\

\vspace{-3mm}
\noindent \textbf{BDD, list-decoding, and MLD.} 
Given a lattice $\Lambda$, a radius $r>0$, and any point $y \in \R^n$,
the task of a decoder is to determine all points $x \in \Lambda$ satisfying
$d(x,y) \leq r^2(\Lambda).$
If $r < \rho(\Lambda)$, there is either no point or a unique point found and the decoder is known as BDD.
Additionally, if $d(x,y) < \rho^2(\Lambda)$, we say that $y$ is within the guaranteed error-correction radius of the lattice. 
If $r \ge \rho(\Lambda)$, there may be more than one point in the sphere.
In this case, the process is called list-decoding rather than BDD.
When list-decoding is used, lattice points within the sphere are enumerated and the decoded lattice point is the closest to $y$ among them. 
MLD simply refers to finding the closest lattice point in $\Lambda$ to any point $y \in \R^n$.
If list-decoding is used, MLD is equivalent to choosing a decoding radius equal to $R(\Lambda)$.\\

\noindent \textbf{Coset decomposition of a lattice.}
Let $\Lambda$ and $\Lambda'$ be two lattices such that $\Lambda' \subseteq \Lambda$.
If the order of the quotient group $\Lambda/\Lambda'$ is $q$, 
then $\Lambda$ can be expressed as the union of $q$ cosets of $\Lambda'$.
We denote by $[\Lambda/\Lambda']$ a system of coset representatives for this partition. 
It follows that $\Lambda = \bigcup_{x_i \in   [\Lambda/\Lambda']}\Lambda' + x_i = \Lambda'+ [\Lambda/\Lambda'].$ \\
%
\noindent \textbf{The $BW$ lattices.} Let the scaling-rotation operator $R(2n)$ in dimension $2n$ 
be defined by the application of the $2\times2$ matrix
\small
\begin{align*}
R(2) = \left[
\begin{matrix}
1 & \ \ 1 \\
1 & -1
\end{matrix}
\right]
\end{align*}
\normalsize
on each pair of components. 
I.e. the scaling-rotation operator is $R(2n)= I_{n} \otimes R(2)$, where $I_{n}$ is the $n\times n$ identity matrix and
$\otimes$ the Kronecker product.
For $\Lambda \subset \R^{2n}$ with generator matrix $G$, 
the lattice generated by $G\cdot R(2n)$ is denoted $R\Lambda$.
\begin{definition}[The squaring construction of $BW_{2n}$ \cite{Forney1988}]
The $BW$ lattices in dimension $2n$ are obtained by the following recursion:
\small
\vspace{-1mm}
\begin{align*}
\begin{split}
BW_{2n} &= \{(\underset{u_1 \in BW_n}{\underbrace{v'_1+m}},\underset{u_2 \in BW_n}{\underbrace{v'_2+m}}),
 v'_i\in RBW_{n},  m \in [BW_{n}/RBW_{n}] \},
\end{split}
\end{align*}
\normalsize
with initial condition $BW_2= \mathbb{Z}^{2}$.
\end{definition}
Using this construction, it is easily seen that $d(BW_{2n})=d(RBW_n)=2d(BW_n)$
and the fundamental coding gain increases infinitely as
 $\gamma(BW_n) = \sqrt{2} \cdot \gamma(BW_{n/2}) = \sqrt{n/2}$~\cite{Forney1988}. \\
Note that the squaring construction can be expressed under 
the form of the Plotkin $(u,u+v)$ construction \cite{Plotkin1960}:
\small
\vspace{-1mm}
\begin{align*}
BW_{2n} & = \{(v'_1+m,v'_2+m), \  v'_i \in  RBW_{n},  m \in  [BW_{n}/RBW_{n}] \},  \\
              & = \{ (v'_1+m, \underset{v'_2}{\underbrace{v'_1+v_2}}+m) \} = \{ (u_1, u_1 +v_2) \}.
\end{align*}
\normalsize
%
\vspace{-7mm}
\section{Bounded-distance $BW$ decoding}
\vspace{-1mm}
\subsection{The new BDD}
Given a point $y=(y_1,y_2)\in \mathbb{R}^{2n}$ to be decoded, 
a well-known algorithm \cite{Schnabl1995}\cite{Dumer2006} for a code 
obtained via the  $(u,u+v)$ construction is to first decode $y_1$ as $u_1$, 
and then decode $y_2-u_1$ as $v_2$ \footnote{The standard decoder for  $(u,u+v)$ has a second round:
once $v_2$ is decoded $u_1$ is re-decoded based on the two estimates $y_1$ and $y_2-v_2$.}. 
Our lattice decoder, Algorithm~\ref{algo_squar}, is double-sided 
since we also decode $y_2$ as $u_2$ and then $y_1-u_2$ as $v_2$:
the decoder is based on the squaring construction.  
The main idea exploited by the algorithm is that if there is
too much noise on one side, e.g. $y_1$,
then there is less noise on the other side, e.g. $y_2$, and vice versa.

\label{sec_BDD}
\vspace{-3mm}
\begin{algorithm}[H]
\caption{Double-sided $(u,u+v)$ decoder of $BW_{2n}$}
\label{algo_squar}
\small
\textbf{Input:} $y=(y_1,y_2) \in \R^{2n}$.
\begin{algorithmic}[1]
\STATE Decode (MLD) $y_1,y_2$ in $BW_n$ as $u_1,u_2$.
\STATE Decode  (MLD) $y_2 - u_1$ in $RBW_n$ as $v_2$. Store $\hat{x} \leftarrow (u_1, u_1+v_2)$.
\STATE Dec. (MLD) $y_1 - u_2$ in $RBW_n$ as $v_1$. Store $\hat{x}' \leftarrow (u_2+v_1,u_2)$.
\STATE \textbf{Return} $x_{dec} = \underset{x \in \{ \hat{x}, \hat{x}' \} }{\text{argmin \ \ }} ||y - x ||$
\end{algorithmic}
\normalsize
\end{algorithm}
\vspace{-5mm}
\begin{theorem}
\label{theo_BDD_BW}
Let $y$ be a point in $\mathbb{R}^{2n}$ such that $d(y,BW_{2n})$ is less than $\rho^2(BW_{2n})$.
Then, Algorithm~\ref{algo_squar} outputs the closest lattice point $x \in BW_{2n}$ to $y$.
\end{theorem}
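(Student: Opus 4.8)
The plan is to show that whenever $y$ lies strictly inside the guaranteed error-correction radius, at least one of the two branches of Algorithm~\ref{algo_squar} reconstructs the true closest point exactly, after which the final comparison cannot fail. First I would record the radius bookkeeping implied by the identities $d(BW_{2n})=d(RBW_n)=2d(BW_n)$ stated after the definition: since $\rho^2(\Lambda)=d(\Lambda)/4$, these give $\rho^2(BW_{2n})=\rho^2(RBW_n)=2\rho^2(BW_n)$. Let $x^\ast=(x_1^\ast,x_2^\ast)$ be the closest point of $BW_{2n}$ to $y$, which is unique because $d(y,x^\ast)=d(y,BW_{2n})<\rho^2(BW_{2n})$ forces any second minimizer to lie within distance $2\rho(BW_{2n})$ of $x^\ast$, below the minimum distance. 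Writing $x^\ast$ in both Plotkin forms $x^\ast=(u_1^\ast,u_1^\ast+v_2^\ast)=(u_2^\ast+v_1^\ast,u_2^\ast)$, I have $u_1^\ast=x_1^\ast,\ u_2^\ast=x_2^\ast\in BW_n$ and $v_2^\ast=x_2^\ast-x_1^\ast,\ v_1^\ast=x_1^\ast-x_2^\ast\in RBW_n$. Setting $e_1=y_1-x_1^\ast$ and $e_2=y_2-x_2^\ast$ gives $\|e_1\|^2+\|e_2\|^2=d(y,BW_{2n})<2\rho^2(BW_n)$.

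The crux is a pigeonhole step on this noise budget: the two summands cannot both be $\ge\rho^2(BW_n)$, so at least one side, say $y_1$ (the argument is symmetric), satisfies $\|e_1\|^2<\rho^2(BW_n)$. I would then trace the $\hat{x}$ branch under this assumption. Since $\|e_1\|<\rho(BW_n)$ and distinct points of $BW_n$ are at distance $\ge2\rho(BW_n)$, the triangle inequality makes $x_1^\ast$ the unique nearest point of $BW_n$ to $y_1$, so line~1 returns $u_1=x_1^\ast$. Line~2 then decodes $y_2-u_1=y_2-x_1^\ast$ in $RBW_n$; the candidate $v_2^\ast\in RBW_n$ leaves residual $(y_2-x_1^\ast)-v_2^\ast=e_2$, and the budget yields $\|e_2\|^2\le\|e_1\|^2+\|e_2\|^2<2\rho^2(BW_n)=\rho^2(RBW_n)$, so the same uniqueness argument gives $v_2=v_2^\ast$ and hence $\hat{x}=(x_1^\ast,x_1^\ast+v_2^\ast)=x^\ast$. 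The mirror computation (decoding $y_2$ correctly in $BW_n$ when $\|e_2\|^2<\rho^2(BW_n)$, then recovering $v_1^\ast$ from residual $e_1$ in $RBW_n$) shows that the other case produces $\hat{x}'=x^\ast$.

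Finally I would close the argument: in either case at least one of $\hat{x},\hat{x}'$ equals $x^\ast$, and because $x^\ast$ is the unique closest point of $BW_{2n}$ to $y$, the competing candidate is no closer; the $\arg\min$ in line~4 therefore outputs $x^\ast$, as claimed.

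The step I expect to require the most care is the asymmetric radius bookkeeping: one must verify that a single budget of $2\rho^2(BW_n)$ simultaneously guarantees correct $BW_n$-decoding on the quieter side (small radius $\rho(BW_n)$) and correct $RBW_n$-decoding of the residual on the noisier side (the larger radius $\rho(RBW_n)=\sqrt{2}\,\rho(BW_n)$), with every inequality kept strict so that each intermediate MLD call has a genuinely unique nearest point. The rest is then a routine substitution using $v_i^\ast=x_2^\ast-x_1^\ast\in RBW_n$ and its negation.
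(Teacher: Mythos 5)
Your proposal is correct and follows essentially the same route as the paper's proof: the pigeonhole step showing at least one side has noise below $\rho^2(BW_n)$, the observation that the residual on the other side then fits within $\rho^2(RBW_n)=\rho^2(BW_{2n})$, and the conclusion that the final $\arg\min$ over the two stored candidates must return the true closest point. You simply spell out the radius bookkeeping and uniqueness of each intermediate MLD output more explicitly than the paper does.
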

\begin{proof}
If $(x_1, x_2) \in BW_{2n}$, then $x_1,x_2 \in BW_n$. 
Also, we have $||(y_1,y_2)||^2 = ||y_1||^2 + ||y_2||^2$.
So if $d(y,BW_{2n})<~\rho^2(BW_{2n})$, 
then at least one among the two $y_i$ is at a distance smaller
than $\frac{\rho^2(BW_{2n})}{2} = \rho^2(BW_{n})$ 
from $BW_{n}$. 
Therefore, at least one of the two $u_i$ is correct. \\
Assume (without loss of generality) that $u_1$ is correct.
We have $d(y_2-u_1,RBW_n)<\rho^2(BW_{2n}) = \rho^2(RBW_n)$. 
Therefore, $y_2-u_1$ is also correctly decoded.\\
As a result, among the two lattice points stored, 
at least one is the closest lattice point to $y$.
\end{proof}
Note that the $BW_n$ decoder in the previous proof
got exploited up to $\rho^2(BW_n)$ only. 
Consequently, Algorithm~\ref{algo_squar} should exceed
the performance predicted by Theorem~\ref{theo_BDD_BW}
given that step~1 is MLD. \\
Algorithm~\ref{algo_squar} can be generalized into the recursive Algorithm~\ref{algo_BW_rec},
where Steps 4, 5, and 6 of the latter algorithm 
replace Steps 1, 2, and 3 of Algorithm~\ref{algo_squar}, respectively.
This algorithm is similar to the parallel decoder of~\cite{Micciancio2008}.
The main difference is that~\cite{Micciancio2008} 
uses the automorphism group of $BW_{2n}$ to get four candidates at each recursion whereas we use 
the squaring construction to generate only two candidates.
Nevertheless, both our algorithm and \cite{Micciancio2008}
use four recursive calls at each recursive section and have the same asymptotic complexity.
\begin{algorithm}
\caption{Recursive BDD of $BW_{2n}$ (where $2n=2^t$)}
\label{algo_BW_rec}
\small
\textbf{Function} $RecBW(y,t)$ \\
\textbf{Input:} $y=(y_1,y_2)  \in \mathbb{R}^{2^t}$, $1 \leq t$. \\
\begin{algorithmic}[1]
\vspace{-3mm}
\IF{t = 1}
\STATE $x_{dec} \leftarrow (\lfloor y_1 \rceil, \lfloor y_2 \rceil)$ \ \ \  // Decoding in $\mathbb{Z}^2$
\ELSE
\STATE $u_1 \leftarrow RecBW(y_1,t-1)$, $u_2 \leftarrow RecBW(y_2,t-1)$ 

\vspace{1mm}

// $y_2 - u_1$ (and $y_1 - u_2$) should be decoded in $RBW_n$: \\
// this is equivalent to decoding  $(y_2 - u_1)\cdot R(2^{t-1})^{-1}$ in $BW_n$ \\
// and then rotate the output lattice point by $R(2^{t-1})$.
\STATE $v_2 \leftarrow RecBW((y_2 - u_1)\cdot R(2^{t-1})^T/2,t-1) \cdot R(2^{t-1}) $. \\
Store $\hat{x} \leftarrow (u_1, u_1+v_2)$.
\STATE $v_1 \leftarrow RecBW((y_1 - u_2) \cdot R(2^{t-1})^T/2,t-1) \cdot R(2^{t-1})$. \\
 Store $\hat{x}' \leftarrow  (u_2+v_1,u_2)$.

\vspace{1mm}

\STATE $x_{dec} = \underset{x \in \{ \hat{x}, \hat{x}' \} }{\text{argmin \ \ }} ||y - x ||$
\ENDIF
\STATE \textbf{Return} $x_{dec}$
\end{algorithmic}
\normalsize
\end{algorithm}
\vspace{-2mm}
\begin{theorem}
\label{theo_complex_1}
Let $n$ be the dimension the lattice $BW_n$ to be decoded.
The complexity of Algorithm~\ref{algo_BW_rec} is $O(n^2)$.
\end{theorem}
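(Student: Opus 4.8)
The plan is to read off a divide-and-conquer recurrence from the structure of Algorithm~\ref{algo_BW_rec} and solve it. Write $C(n)$ for the number of arithmetic operations performed by Algorithm~\ref{algo_BW_rec} on an input of dimension $n=2^t$. First I would inspect the non-recursive body of the function: Step~4 issues two recursive calls in dimension $n/2$, and Steps~5 and~6 issue one call each in dimension $n/2$, so every invocation spawns exactly four recursive calls on half-dimensional inputs. The remaining work at a given level consists of the vector subtractions $y_2-u_1$ and $y_1-u_2$, the applications of the operators $R(2^{t-1})^T/2$ and $R(2^{t-1})$ in Steps~5 and~6, the assembly of the two candidates $\hat{x},\hat{x}'$, and the final norm comparison in Step~7. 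I would argue that each of these costs $O(n)$, so the total per-level overhead is $O(n)$, yielding the recurrence
\begin{align*}
C(n) = 4\,C(n/2) + O(n),\qquad C(2)=O(1).
\end{align*}

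The step deserving the most care --- and the place where the stated $O(n^2)$ bound (rather than a weaker $O(n^2\log n)$) is actually earned --- is the cost of applying the scaling-rotation operator. The point to stress is that $R(2^{t-1})=I_{2^{t-2}}\otimes R(2)$ is block-diagonal with $2\times 2$ blocks, so multiplying a length-$(n/2)$ vector by $R(2^{t-1})$ (or by its transpose) touches each coordinate a constant number of times and therefore costs $O(n)$, not the $O(n^2)$ one would incur by treating it as a dense matrix. If instead the rotation were applied as a generic matrix-vector product, the overhead term would become $O(n^2)$ and the recurrence would solve to $\Theta(n^2\log n)$; so making the linear-time application explicit is what keeps the per-level work linear.

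Finally I would solve the recurrence. With $a=4$, $b=2$ and driving term $f(n)=O(n)$, one has $\log_b a = 2$ and $f(n)=O(n^{2-\epsilon})$ for $\epsilon=1$, so the Master theorem places us in the leaf-dominated regime and gives $C(n)=\Theta(n^{\log_2 4})=\Theta(n^2)$. Alternatively, setting $D(t)=C(2^t)$ turns the recurrence into $D(t)=4D(t-1)+c\,2^t$, whose solution $D(t)=B\cdot 4^t - c\,2^t$ confirms $C(n)=\Theta(4^{t})=\Theta(n^2)$. I expect the only genuine obstacle to be the bookkeeping of the linear overhead --- in particular, being explicit that the rotations are cheap because of their Kronecker structure --- since the recurrence itself is standard and its solution is immediate once the $O(n)$ per-level cost is established.
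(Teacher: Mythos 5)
Your proposal is correct and follows essentially the same route as the paper, which simply states the recurrence $\mathfrak{C}(n)=4\mathfrak{C}(n/2)+O(n)=O(n^2)$ and stops. Your additional observation that the Kronecker structure of $R(2^{t-1})$ is what keeps the per-level overhead linear is a worthwhile clarification, but it does not change the argument.
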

\begin{proof}
Let $\mathfrak{C}(n)$ be the complexity of the algorithm for $n=2^t$.
We have $\mathfrak{C}(n) = 4 \mathfrak{C}(n/2)+O(n) = O(n^2).$
\end{proof}

\subsection{Performance on the Gaussian channel}

In the appendix (see Section~\ref{app_1}), 
we show via an analysis of the effective error coefficient of Algorithm~\ref{algo_BW_rec} that the loss in performance 
of this algorithm compared 
to MLD (in dB) is expected to grow linearly with $n$. \\
Our simulations show that there is a loss of $\approx$0.25 dB for $n=16$, $\approx$0.5 dB for $n=32$, $\approx$1.25 dB for $n=64$ 
(compare $\aleph=1$ and $\aleph=20$ on Figure~\ref{fig_influ_list}) and $\approx$2.25 dB for $n=128$.
As a result, this BDD is not suited for effective decoding of $BW$ lattices on the Gaussian channel.
However, it is essential for building efficient decoders as shown in the next section.


\section{List-decoding of $BW$ lattices beyond the packing radius}

Let $L(\Lambda,r^2)$ be the maximum number of lattice points of $\Lambda$ within a sphere of radius $r$ 
around any $y \in \R^n$. If $\Lambda=BW_n$ we write $L(n,r^2)$. 
The following lemma is proved in \cite{Grigorescu2017}.
\vspace{-2mm}
\begin{lemma}
\label{lemma_johnson}
The list size of the $BW_n$ lattices is bounded as~\cite{Grigorescu2017}:  
\begin{itemize}
\item $L(n,r^2) \leq \frac{1}{2 \epsilon }$ if $r^2 \leq d(BW_n)(1/2 - \epsilon)$, $0<\epsilon \le 1/4$.
\item $L(n,r^2) = 2n$ if $r^2 = d(BW_n)/2$.
\item $L(n,r^2) \le   2  n^{16 \log_2(1/\epsilon)}$ if $r^2 \leq d(BW_n)(1 - \epsilon)$, $0~<~\epsilon~\le~1/2$.
\end{itemize}
\end{lemma}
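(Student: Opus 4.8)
The plan is to prove the three bounds separately, since the first two are generic (they use only the minimum distance $d(BW_n)$ and elementary geometry), whereas the third genuinely exploits the recursive $(u,u+v)$ structure of the family. For the first bound I would run the classical Johnson/Plotkin averaging argument. Suppose $x_1,\dots,x_L \in BW_n$ all lie within squared radius $r^2$ of $y$, and set $v_i = x_i - y$. Distinctness forces $\|v_i - v_j\|^2 = \|x_i - x_j\|^2 \ge d(BW_n)$, so expanding gives $\langle v_i,v_j\rangle \le r^2 - d(BW_n)/2$ for $i \ne j$. Combining this with $0 \le \|\sum_i v_i\|^2 = \sum_i\|v_i\|^2 + \sum_{i\ne j}\langle v_i,v_j\rangle \le L r^2 + L(L-1)(r^2 - d(BW_n)/2)$ and rearranging yields $L \le \frac{d(BW_n)/2}{d(BW_n)/2 - r^2}$. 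Substituting $r^2 \le d(BW_n)(1/2 - \epsilon)$ makes the denominator at least $d(BW_n)\,\epsilon$, hence $L \le 1/(2\epsilon)$.

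For the second bound, at the threshold $r^2 = d(BW_n)/2$ the Johnson estimate degenerates (its denominator vanishes), so I would instead observe that the same expansion now gives $\langle v_i, v_j\rangle \le r^2 - d(BW_n)/2 = 0$, i.e. the list vectors have pairwise non-positive inner products. The classical fact that any family of vectors in $\R^n$ with pairwise non-positive inner products has at most $2n$ members then gives $L(n, d(BW_n)/2) \le 2n$. The matching lower bound $\ge 2n$ requires exhibiting a single point $y$ with exactly $2n$ lattice points at squared distance $d(BW_n)/2$; here I would use the minimal-vector/deep-hole structure of $BW_n$, whose $2n$ relevant difference vectors realize the extremal $\{\pm f_i\}$ configuration, making the bound tight.

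The third bound is the substantial one, and is where I would lean on the recursion $BW_{2n} = \{(u_1, u_1+v_2): u_1 \in BW_n,\ v_2 \in RBW_n\}$, together with $d(BW_{2n}) = 2\,d(BW_n)$ and the self-similarity that $RBW_n$ is a $\sqrt2$-scaled, rotated copy of $BW_n$ (so its list-size function is that of $BW_n$ at radius rescaled by $1/\sqrt2$). Given a list within squared radius $r^2 = d(BW_{2n})(1-\epsilon)$ of $y=(y_1,y_2)$, I would project onto the first block (giving candidates $u_1 \in BW_n$ near $y_1$) and onto the syndrome $v_2 = x_2 - x_1 \in RBW_n$ (near $y_2 - y_1$), bound each sub-list by the induction hypothesis at the correspondingly reduced dimension, and multiply the counts. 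Telescoping the resulting recurrence over the $\log_2 n$ levels of the construction is what turns the bound into one polynomial in $n$ with exponent $O(\log_2(1/\epsilon))$.

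The main obstacle is exactly this recursion: a naive projection inflates the relative decoding radius, since the factor $\sqrt2$ in $\|b-a\|^2 \le 2(\|a\|^2+\|b\|^2)$ pushes a radius $r$ in $BW_{2n}$ to a radius near $2\,d(BW_n)$ in each component, i.e. beyond the component minimum distance, so the induction cannot close directly. Controlling this blow-up, by splitting the budget $\|x_1-y_1\|^2 + \|x_2-y_2\|^2 \le r^2$ so that at least one side stays within a genuinely list-decodable radius, and carefully tracking how $\epsilon$ degrades from level to level while paying only a per-level polynomial factor, is the delicate part, and it is what forces the constant $16$ and the $\log_2(1/\epsilon)$ appearing in the exponent.
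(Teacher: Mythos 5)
The paper gives no proof of this lemma to compare against: it is imported verbatim from \cite{Grigorescu2017}, so your proposal has to stand on its own. Your treatment of the first two bullets is sound and is essentially the standard argument. The Johnson/Plotkin averaging computation correctly yields $L \le \frac{d/2}{d/2-r^2} \le \frac{1}{2\epsilon}$, and at the threshold $r^2=d(BW_n)/2$ the pairwise inner products of the $v_i$ become non-positive, so the classical bound of $2n$ such nonzero vectors in $\R^n$ applies (you should dispose of the degenerate case $v_i=0$ explicitly: then any other list point would be a nonzero lattice vector of squared norm at most $d/2<d$, so the list has size one). The matching lower bound needed for the stated \emph{equality} in the second bullet is only asserted; to claim $L(n,d/2)=2n$ rather than $\le 2n$ you must actually exhibit a $y$ with $2n$ lattice points at squared distance $d(BW_n)/2$, for instance by propagating the $\Z^2$ deep-hole configuration (the point $(1/2,1/2)$ with its four neighbours) through the squaring construction.

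The genuine gap is in the third bullet, and you have in effect diagnosed it yourself without repairing it. From $\|x_1-y_1\|^2+\|x_2-y_2\|^2\le r^2=2d(BW_n)(1-\epsilon)$ you can only guarantee that one component error is at most $d(BW_n)(1-\epsilon)$; the other may be as large as $2d(BW_n)(1-\epsilon)$, i.e.\ at relative radius $2(1-\epsilon)\ge 1$ in the component lattice, where no polynomial list-size bound exists. Even in the favourable case where both projections stay at relative radius $1-\epsilon$, the naive product bound gives the recurrence $L_{2n}\lesssim L_n^2$, whose solution is exponential in $n$, not polynomial. Saying that ``controlling this blow-up is the delicate part'' names the obstacle but supplies no mechanism for overcoming it, and consequently nothing in your argument produces a recurrence whose solution is $2n^{16\log_2(1/\epsilon)}$; the exponent $16\log_2(1/\epsilon)$ is never derived. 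The repair (used in \cite{Grigorescu2017}, and mirrored in the budget-splitting analysis with $a=2/3\,\delta$ that this paper uses for Theorem~3) is a case analysis on how the squared error divides between $\delta(u_1,y_1)$ and $\delta(v_2,y_2-u_1)$, arranged so that in every case one of the two sub-decodings happens at a strictly reduced relative radius where the first bullet bounds its list size by a constant depending on $\epsilon$; only then does telescoping over the $\log_2 n$ levels give a bound of the stated form. As written, your proposal establishes the first bullet, most of the second, and only the strategy---not the statement---of the third.
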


\cite{Grigorescu2017} also shows that the parallel BDD of \cite{Micciancio2008}, which uses 
the automorphism group of $BW_n$,
can be slightly modified to output a list of all lattice points lying at a squared distance $r^2 =d(BW_n)(1 - \epsilon), \forall \epsilon >0$,  
from any $y\in \R^n$ in time $O(n^2) \cdot L(n,r^2)^2$. 
With Lemma~\ref{lemma_johnson}, this becomes $n^{O(\log(1/\epsilon))}$. 
This result is of theoretical interest: 
it shows that there exists a polynomial time algorithm in the dimension
for any radius bounded away from the minimum distance.
However, the quadratic dependence in the list-size is a drawback:
finding an algorithm with quasi-linear dependence in the list-size is stated as an open problem in \cite{Grigorescu2017}.

In the following, we show that if we use the squaring construction rather 
than the automorphism group of $BW_n$ for list-decoding,
it is possible to get a quasi-linear complexity in the list size. 
This enables to get a practical list-decoding algorithm up to $n=128$.

\subsubsection{Some notations}

Notice that $L(n,r^2)=L(RBW_n, 2 r^2)$, e.g. 
both are equal to $2n$ if $r^2=d(BW_n)/2$.
It is therefore convenient to consider the relative squared distance 
as in~\cite{Grigorescu2017}: $\delta(x,y) = \frac{d(x,y)}{d(\Lambda)}$, $x \in \Lambda$ \footnote{Here, $\Lambda$ should be the ``smallest" lattice to which $x$ belongs: e.g. take $u \in BW_n$. We also have $u \in RBW_n$ but $\Lambda$ should be $BW_n$. }.
Then, if we define $l(\Lambda, r^2 / d(\Lambda))=L(\Lambda,r^2)$
this yields for instance $l(n,1/2)=l(BW_n,1/2)=l(RBW_n,1/2)=2n$. 
The relative squared radius is defined as the quantity $r^2 / d(\Lambda)$.
For the rest of this section, $\delta$ is the relative squared radius considered for decoding.
Let $y=(y_1,y_2) \in \R^{2n}$ and $x=(u_1,u_1+v_2) = (u_2+v_1,u_2) \in BW_{2n}$ 
be any lattice point where $\delta(x,y) \le \delta$.
We recall that for BDD of $BW_n$ we have $\delta=1/4$. \\
The following lemma is trivial, but convenient to manipulate distances.
\begin{lemma}(Lemma~2.1 in~\cite{Grigorescu2017}) \\
Let $y=(y_1,y_2) \in \R^{2n}$ and $x=(u_1,u_1+v_2) \in~BW_{2n}$. Then,
\begin{align}
\label{equ_dist_conv}
\delta(x,y) = \delta(u_1,y_1)/2 + \delta(v_2,y_2-u_1).
\end{align}
\end{lemma}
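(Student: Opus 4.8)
The plan is to reduce the identity to the elementary Pythagorean split of the squared distance and then track the three normalizing minimum distances with care. First I would write $x=(u_1,u_1+v_2)$ and $y=(y_1,y_2)$, and use the fact that the two halves occupy orthogonal coordinate blocks to obtain
\[
d(x,y) = \|u_1-y_1\|^2 + \|(u_1+v_2)-y_2\|^2 = d(u_1,y_1) + d(v_2,y_2-u_1).
\]
Dividing both sides by $d(BW_{2n})$ produces $\delta(x,y)$ on the left and two fractions on the right that I must then identify with relative distances.

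The only substantive step is the normalization. By the convention recorded in the footnote, $\delta(u_1,y_1)$ is normalized by $d(BW_n)$, since $u_1\in BW_n$ and $BW_n$ is the smallest $BW$ lattice containing it, whereas $\delta(v_2,y_2-u_1)$ is normalized by $d(RBW_n)$, since $v_2\in RBW_n$. I then invoke the relation stated just after Definition~1, namely $d(BW_{2n})=d(RBW_n)=2\,d(BW_n)$. Substituting this into the first fraction gives
\[
\frac{d(u_1,y_1)}{d(BW_{2n})} = \frac{d(u_1,y_1)}{2\,d(BW_n)} = \tfrac12\,\delta(u_1,y_1),
\]
which is exactly where the factor $1/2$ originates, while the second fraction gives
\[
\frac{d(v_2,y_2-u_1)}{d(BW_{2n})} = \frac{d(v_2,y_2-u_1)}{d(RBW_n)} = \delta(v_2,y_2-u_1),
\]
with no scaling. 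Adding the two yields the claimed equality~\eqref{equ_dist_conv}.

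The only obstacle here is bookkeeping: one must resist conflating the three denominators $d(BW_{2n})$, $d(BW_n)$, and $d(RBW_n)$. Once the chain $d(BW_{2n})=d(RBW_n)=2\,d(BW_n)$ is applied, the asymmetry between the two summands — a factor $1/2$ on the $u_1$ term and none on the $v_2$ term — drops out automatically. This is precisely why the lemma, though trivial, is worth isolating before distances are manipulated recursively in the later list-decoding analysis.
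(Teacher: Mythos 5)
Your proof is correct and is precisely the argument the paper leaves implicit (the lemma is stated without proof, deferred to Lemma~2.1 of the cited reference): the orthogonal block decomposition of the squared distance, followed by normalization using $d(BW_{2n})=d(RBW_n)=2\,d(BW_n)$ and the footnoted convention that each relative distance is taken with respect to the smallest lattice containing the point. Nothing is missing.
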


\subsubsection{List-decoding with $r^2 < 9/16 d(BW_{n})$ }
Assume that the squared norm of the noise is $r^2$ and $\delta = (r^2+\epsilon)/d(BW_{n})$.
Consider $d(x,y)=d(u_1,y_1)+d(u_2,y_2)$.
We split the possible situations into two main cases (similarly to Steps 2-3 of Algorithm~\ref{algo_squar}): $d(u_1,y_1) \leq r^2/2$ and $d(u_1,y_1) > r^2/2$.
For the first case, $y_1$ should be list-decoded in $BW_{n/2}$, and for each $u_1 \in BW_{n/2}$ in the resulting list, $y_2-u_1$ should be list-decoded in $RBW_{n/2}$.
Regarding the noise repartition, one can get the following two extreme configurations (but not simultaneously): $d(u_1,y_1)=r^2/2$, i.e. $\delta(u_1,y_1)=\delta$ and $d(v_2,y_2-u_1)=r^2$, i.e. $\delta(v_2,y_2-u_1)=\delta$.
Consequently, without any advanced strategy, the relative squared decoding radius to list-decode in $BW_{n/2}$ and $RBW_{n/2}$ should be $\delta$. 
The maximum of the product of the two resulting list-sizes, which is a key element in the complexity analysis below, is $l(n/2, \delta)^2$.
In order to reduce this number, we split this first case (i.e. $d(u_1,y_1)\leq r^2/2$) into two sub-cases. Let $0 \leq a' \leq r^2/2$.
\begin{itemize}
\item $0 \leq d(u_1,y_1) < a' $ and $r^2/2  < d(v_2,y_2-u_1) \le r^2$: 
then, $y_1$ should be list-decoded in $BW_{n/2}$ with a relative squared radius $a_1=a'/d(BW_{n/2})$ 
and $y_2-u_1$ list decoded in $RBW_{n/2}$ with a relative squared radius $\delta$. 
\item $a' \le d(u_1,y_1) \leq r^2/2$ and $r^2/2  < d(v_2,y_2-u_1) \le r^2-a'$: 
then,  $y_1$ should be list-decoded in $BW_{n/2}$
with a relative squared radius $\delta$ 
and $y_2-u_1$ list-decoded in $RBW_{n/2}$ with a relative squared radius $a_2=(r^2-a')/d(RBW_{n/2})$. 
\end{itemize} 
The size of the two resulting lists are bounded by $l(n/2,a_1)\cdot l(RBW_{n/2},\delta)$ and $l(n/2,\delta)\cdot l(RBW_{n/2},a_2)$.
Consequently, if we choose $a_1=a_2=a$, i.e.  $a=2/3 \delta$,
the two bounds are equal. The maximum number of candidates to consider becomes $2  l(n/2,\delta)\cdot l(n/2,a)$ which is likely to be much smaller than $l(n/2, \delta)^2$, the bound obtained without the splitting strategy. The second case (i.e. $d(u_1,y_1) > r^2/2$) is identical by symmetry.


This analysis yields Algorithm~\ref{algo_BW_rec_list} listed below.
The ``removing step" (10 in bold) is added to ensure 
that a list with no more than $l(n,\delta)$ elements 
is returned by each recursive call.
The maximum number of points to process by this removing step is $4 l(n/2,\delta) l(n/2,a)$. 
Regarding Step 11, using the classical Merge Sort algorithm, 
it can be done in \\ $O(n \cdot l(n,\delta) \log(l(n,\delta) ) )$ operations (see App.~\ref{app_2}). 

The following theorem shows that we get an algorithm of quasi-linear complexity in the list size $l(n,\delta)$. 
\begin{theorem}
\label{main_theo}
Given any point $y \in \R^n$ and $1/4\leq \delta < 9/16$, 
Algorithm~\ref{algo_BW_rec_list} outputs the list of all lattice points in $BW_n$ lying 
within a sphere of relative squared radius $\delta$ around $y$ in time:
\begin{itemize}
\small
\item $O(n^2\log n) l(n,\delta) + O(n \log n) l(n, \delta) \log l(n, \delta)$ \\ if $1/4\leq \delta < 3/8$
\item $O(n^{1+\log_2(1+l(n,2/3 \delta ))} \log n) l(n,\delta) l(n,2/3 \delta) +$  \\
$O(n^{\log_2(1+l(n,2/3 \delta ))})  l(n,\delta) \log l(n,\delta)$ if $3/8 \leq \delta < 9/16$.
\end{itemize}
\end{theorem}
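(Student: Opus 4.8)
The plan is to prove the two claims bundled in the statement—correctness (the returned list is exactly the set of lattice points within relative squared radius $\delta$) and the running-time bounds—separately, both by induction on the recursion depth $t=\log_2 n$ with base case the trivial enumeration in $\Z^2$.

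For correctness I would rely entirely on the decomposition $\delta(x,y)=\delta(u_1,y_1)/2+\delta(v_2,y_2-u_1)$ of \eqref{equ_dist_conv}, together with its mirror image $\delta(x,y)=\delta(u_2,y_2)/2+\delta(v_1,y_1-u_2)$ coming from the second representation $x=(u_2+v_1,u_2)$. Fix $x$ with $\delta(x,y)\le\delta$. Since $d(x,y)=d(u_1,y_1)+d(u_2,y_2)$, either $d(u_1,y_1)\le r^2/2$ (main case A) or $d(u_2,y_2)<r^2/2$ (main case B, by symmetry). In case A I would split on the threshold $a'=r^2/3$ as in the two sub-cases preceding the algorithm: if $\delta(u_1,y_1)<a=2/3\,\delta$, then $u_1$ lies in the radius-$a$ list of $y_1$ and, by \eqref{equ_dist_conv}, $\delta(v_2,y_2-u_1)\le\delta$ so $v_2$ lies in the radius-$\delta$ list of $y_2-u_1$; if $a\le\delta(u_1,y_1)\le\delta$, then $u_1$ is in the radius-$\delta$ list and $v_2$ in the radius-$a$ list of $y_2-u_1$. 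Case B is identical after swapping indices. Thus every target point is assembled as some $\hat x$; Step~10 discards only points strictly outside radius $\delta$ and hence keeps all target points, and Step~11 deduplicates, leaving a list of size $\le l(n,\delta)$. The induction hypothesis guarantees the recursive calls return the required sub-lists, closing the argument.

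For the running time I would set up a recurrence tracking the two radii $\delta$ and $a=2/3\,\delta$. Writing $\mathfrak{C}_\delta(n)$ for the cost at dimension $n$ and radius $\delta$, the four sub-cases spawn, at dimension $n/2$, about $2+2\,l(n/2,a)$ calls at radius $\delta$ (two outer decodings at radius $\delta$, plus one inner decoding of $y_2-u_1$, resp. $y_1-u_2$, for each element of the two radius-$a$ outer lists) and about $2+2\,l(n/2,\delta)$ calls at radius $a$, so that $\mathfrak{C}_\delta(n)=\bigl(2+2\,l(n/2,a)\bigr)\mathfrak{C}_\delta(n/2)+\bigl(2+2\,l(n/2,\delta)\bigr)\mathfrak{C}_a(n/2)+W(n)$, where $W(n)=O(n)\,l(n/2,\delta)\,l(n/2,a)+O\!\bigl(n\,l(n,\delta)\log l(n,\delta)\bigr)$ collects the $R(2^{t-1})$ rotations applied to each assembled candidate and the removing/merge cost of Steps~10--11. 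I would then solve this in the two regimes. For $1/4\le\delta<3/8$ we have $a<1/4$, so the radius-$a$ calls are below the packing radius and reduce to the BDD of Algorithm~\ref{algo_BW_rec}, giving $\mathfrak{C}_a(n)=O(n^2)$ and $l(\cdot,a)=O(1)$ by Lemma~\ref{lemma_johnson}; the radius-$\delta$ recursion then has $O(1)$ branching, and unrolling over the $\log_2 n$ levels with $W(n)$ yields $O(n^2\log n)\,l(n,\delta)+O(n\log n)\,l(n,\delta)\log l(n,\delta)$. For $3/8\le\delta<9/16$ we have $a\in[1/4,3/8)$, so $l(\cdot,a)=l(n,2/3\,\delta)$ becomes the controlling branching quantity while the radius-$a$ subtrees stay comparatively cheap; the product of the per-level branching factors is $\bigl(2+2\,l(n,2/3\,\delta)\bigr)^{\log_2 n}=n^{1+\log_2(1+l(n,2/3\,\delta))}$, and folding in $W(n)$ and the leaf count $\approx l(n,\delta)\,l(n,2/3\,\delta)$ produces the two stated terms.

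The main obstacle is this coupled two-radius recurrence. The delicate points are: (i) showing the radius-$a$ subtrees never dominate, which needs the monotonicity and Johnson-type bounds of Lemma~\ref{lemma_johnson} to compare $l(\cdot,a)$ with $l(\cdot,\delta)$ and to confirm that $a<1/4$ (resp. $a<3/8$) keeps those calls cheap; (ii) extracting the exact exponent $\log_2(1+l(n,2/3\,\delta))$, which hinges on identifying the per-level branching as $2\bigl(1+l(n/2,a)\bigr)$ and on the bound $\delta<9/16$ keeping every relative radius produced by the recursion inside the range where Lemma~\ref{lemma_johnson} still yields polynomial list sizes; and (iii) separating the rotation cost from the merge-sort cost so that the first term carries the factor $l(n,\delta)\,l(n,2/3\,\delta)$ while the second carries only $l(n,\delta)\log l(n,\delta)$. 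The remainder is bookkeeping over the recursion tree.
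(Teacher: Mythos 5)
Your proposal matches the paper's own argument: the paper's proof consists precisely of the coupled recurrence $\mathfrak{C}(n,\delta)\le(2+2l(n/2,a))\mathfrak{C}(n/2,\delta)+(2+2l(n/2,\delta))\mathfrak{C}(n/2,a)+O(n)\,l(n/2,\delta)l(n/2,a)+O(n\,l(n,\delta)\log l(n,\delta))$, solved in the two regimes by noting $a<1/4$ (so the radius-$a$ calls collapse to the $O(n^2)$ BDD) when $\delta<3/8$, and by bootstrapping the first bound for $\mathfrak{C}(n/2,a)$ when $3/8\le\delta<9/16$. Your added correctness argument via \eqref{equ_dist_conv} and the threshold $a'=r^2/3$ is exactly the case analysis the paper gives in the text preceding the theorem, so the approach is essentially the same.
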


\begin{proof}
Let $\mathfrak{C}(n,\delta)$ be the complexity of Algorithm~\ref{algo_BW_rec_list}.
We have
\small
\begin{align}
\label{eq_main}
\begin{split}
\mathfrak{C}(n,\delta)  & \leq  \underset{\text{Numb. recursive calls with }\delta}{\underbrace{(2+2l(n/2,a)) \mathfrak{C}(n/2,\delta) }}   
+ \underset{\text{Numb. recursive calls with }a}{\underbrace{(2+2l(n/2,\delta) ) \mathfrak{C}(n/2,a)}}+  \\
&\underset{\text{removing}}{\underbrace{4 l(n/2,\delta)l(n/2,a) O(n)}} + \underset{\text{Merge Sort}}{\underbrace{O(n \cdot l(n,\delta) \log(l(n,\delta)) )}}.
\end{split}
\end{align}
\normalsize


If $\delta < 3/8$, then $l(n,a) \leq 1$ and $\mathfrak{C}(n/2,a) \leq  \mathfrak{C}(n/2,1/4) = O((n/2)^2)$ (the complexity of Algorithm~\ref{algo_BW_rec}).
\eqref{eq_main} becomes 
\small
\begin{align}
\label{eq_inte_res}
\begin{split}
\mathfrak{C}(n,\delta)  = 
O(n^2\log n) l(n,\delta) +
O(n \log n) l(n, \delta) \log l(n, \delta).
\end{split}
\end{align}
\normalsize
If $\delta < 9/16$, then $\mathfrak{C}(n/2,a) <  \mathfrak{C}(n/2,3/8)$. 
Using \eqref{eq_inte_res} for $\mathfrak{C}(n/2,3/8)$, \eqref{eq_main} becomes 
\begin{align*}
\begin{split}
&\mathfrak{C}(n,\delta)  = 
O(n^{1+\log_2(1+l(n,2/3 \delta ))} \log n) l(n,\delta) l(n,2/3 \delta) + \\
 & O(n^{\log_2(1+l(n,2/3 \delta ))})  l(n,\delta) \log l(n,\delta) 
\end{split}
\end{align*}
\normalsize

\end{proof}
\vspace{-3mm}
\begin{algorithm}
\caption{First recursive list-decoding of $BW_{2n}$  ($2n=2^t$).}
\label{algo_BW_rec_list}
\small
\textbf{Function} $ListRecBW(y,t,\delta)$ \\
\textbf{Input:} $y=(y_1,y_2) \in \mathbb{R}^{2^t}$, $1 \leq t$, $1/4 \le \delta<3/4$. \\
\begin{algorithmic}[1]
\vspace{-3mm}
\STATE $a \leftarrow 2/3 \cdot  \delta$
\STATE $r \leftarrow \sqrt{2^{t-1} \cdot \delta}$
\IF{t = 1}
\STATE $\hat{x} \leftarrow  Enum_{\Z_2}(y,r)$\ \ // Enum. in $\mathbb{Z}^2$ with radius $r=\sqrt{\delta}$.
\ELSE


\STATE $\hat{x}_1 \leftarrow SubRoutine(y_1,y_2,t,a,\delta,0)$

\STATE $\hat{x}_2 \leftarrow SubRoutine(y_1,y_2,t,\delta,a,0)$

\STATE $\hat{x}_3 \leftarrow SubRoutine(y_2,y_1,t,\delta,a,1)$

\STATE $\hat{x}_4 \leftarrow SubRoutine(y_2,y_1,t,a,\delta,1)$

\vspace{1mm}
\STATE \textbf{Remove all candidates at a distance $> r$ from $y$}.
\vspace{1mm}

\STATE Sort the remaining list of candidates in a lexicographic order and remove all duplicates.

\vspace{1mm}
\ENDIF
\STATE \textbf{Return} the list of all the candidates remaining.
\end{algorithmic}
\normalsize
\end{algorithm}

\begin{algorithm}
\caption{Subroutine of Algorithms~\ref{algo_BW_rec_list}}
\label{algo_BW_subr}
\small
\textbf{Function} $SubRoutine(y_1,y_2,t,\delta_1,\delta_2,reverse)$ \\
\textbf{Input:} $y_1,y_2 \in \mathbb{R}^{2^{t-1}}$, $1 \leq t$, $0 < \delta_1,\delta_2 \leq 3/4$, $rev.~\in~\{ 0,1 \}$. \\
\begin{algorithmic}[1]
\vspace{-3mm}
\IF{$\delta_1 \leq 1/4$}
\STATE $u_1\_List \leftarrow RecBW(y_1,t-1)$
\ELSE
\STATE $u_1\_List \leftarrow ListRecBW(y_1,t-1,\delta_1)$
\ENDIF
\vspace{1mm}
\FOR{$u_1 \in u_1\_List$}
\IF{$\delta_2 \leq 1/4$}
\STATE $v_2\_List(u_1)  \leftarrow RecBW((y_2~-~t_1)~\cdot R(2^{t-1})^T/2,
t~-~1) \cdot R(2^{t-1}) $
\ELSE
\STATE $v_2\_List(u_1) \leftarrow ListRecBW((y_2-t_1)~\cdot R(2^{t-1})^T/2,
t-1, \delta_2 ) \cdot R(2^{t-1})$
\ENDIF
\ENDFOR
\vspace{1mm}
\FOR{$u_1 \in u_1\_List$}
\FOR{$v_2 \in v_2\_List(u_1)$}
\IF{$reverse=0$}
\STATE Compute and store $\hat{x} \leftarrow (u_1, v_2+u_1)$.
\ELSE
\STATE Compute and store  $\hat{x} \leftarrow (v_2+u_1,u_1)$.
\ENDIF
\ENDFOR
\ENDFOR
\vspace{1mm}
\STATE \textbf{Return} the list of all candidates $\hat{x}$.
\vspace{1mm}

\end{algorithmic}
\normalsize
\end{algorithm}
Unfortunately, the performance of Algorithm~\ref{algo_BW_rec_list} 
on the Gaussian channel is disappointing.
This is not surprising: notice that due to the ``removing step" (10 in bold), 
some points that are correctly decoded by Algorithm~\ref{algo_BW_rec} 
(the BDD) are not in the list outputted by Algorithm~\ref{algo_BW_rec_list}!
Therefore, instead of removing all candidates at a distance greater than $r$, 
it is tempting to keep $\aleph$ candidates at each step.

\subsection{An efficient list decoder on the Gaussian channel}
We call Algorithm~5 a modified version 
of Algorithm~\ref{algo_BW_rec_list} where the $\aleph(\delta)$ closest candidates are kept at each recursive step (instead of step 10, i.e. keeping only the points in the sphere of radius $r$) and steps 10 and 11 are flipped.
The size of the list $\aleph(\delta)$, for a given $\delta$, is a parameter to be fine tuned:
e.g. for $\delta = 3/8$, one needs to choose only $\aleph(3/8)$ but for $\delta = 1/2$, one needs to choose $\aleph(1/2)$ and $\aleph(2/3 \cdot 1/2=1/3)$. 
The following theorem follows from Theorem~\ref{main_theo}.
\begin{theorem}
\label{theo_main_pract}
Given any point $y \in \R^n$ and $1/4\leq \delta < 9/16$, 
Algorithm~5 outputs the list of all lattice points in $BW_n$ lying 
within a sphere of relative squared radius $\delta$ around $y$ in time:
\begin{itemize}
\small
\item $O(n^2\log n) \aleph(\delta)+ O(n \log n)\aleph(\delta) \log \aleph(\delta)$ \\
$1/4\leq \delta < 3/8$
\item $O(n^{1+\log_2(1+\aleph(2/3\delta ))} \log n) \aleph(\delta ) \aleph(2/3\delta )+$  \\
$O(n^{\log_2(1+\aleph(2/3\delta ))}) \aleph(\delta ) \log \aleph(\delta )$ 
if $3/8 \leq \delta < 9/16$.
\end{itemize}
\end{theorem}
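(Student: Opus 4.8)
The plan is to obtain the statement directly from Theorem~\ref{main_theo} by re-running its complexity recursion with the fixed, tuned list sizes $\aleph(\delta)$ in place of the data-dependent bounds $l(n,\delta)$. Two things must be checked: that Algorithm~5 still returns every lattice point inside the target sphere, and that the per-level candidate counts are now controlled by $\aleph$ rather than $l$.

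For correctness I would rely on the case analysis preceding Theorem~\ref{main_theo}. That analysis already shows that every $x=(u_1,u_1+v_2)$ with $\delta(x,y)\le\delta$ is generated by at least one of the four subroutine calls (Algorithm~\ref{algo_BW_subr}), with $u_1$ and $v_2$ recovered at relative radii $\delta$ and $a=2/3\,\delta$ (or $a$ and $\delta$). Algorithm~5 differs from Algorithm~\ref{algo_BW_rec_list} only in that, at each recursive step, it keeps the $\aleph(\delta)$ candidates closest to $y$ and performs the sort before this pruning rather than discarding everything outside radius $r$. Hence, as long as $\aleph(\delta)\ge l(n,\delta)$, the pruning never deletes a point lying inside the sphere: there are at most $l(n,\delta)$ such points, so retaining the $\aleph(\delta)$ closest keeps all of them. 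Under this assumption the recursive invariant ``each returned sublist contains all relevant points'' is preserved, and the full list is output.

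For the complexity I would write the exact analogue of the recursion~\eqref{eq_main}. Since a call at relative radius $\delta$ now retains at most $\aleph(\delta)$ candidates and a call at radius $a$ at most $\aleph(a)$, every occurrence of $l(\cdot,\cdot)$ in~\eqref{eq_main} is replaced by the corresponding $\aleph$, giving, with $a=2/3\,\delta$,
\begin{align*}
\mathfrak{C}(n,\delta) &\leq (2+2\aleph(a))\,\mathfrak{C}(n/2,\delta) + (2+2\aleph(\delta))\,\mathfrak{C}(n/2,a) \\
&\quad + 4\,\aleph(\delta)\,\aleph(a)\,O(n) + O(n\,\aleph(\delta)\log\aleph(\delta)).
\end{align*}
I would then solve this recurrence exactly as in Theorem~\ref{main_theo}. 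For $1/4\le\delta<3/8$ we have $a<1/4$, so $\aleph(a)\le1$ and $\mathfrak{C}(n/2,a)$ collapses to the $O((n/2)^2)$ cost of the BDD (Algorithm~\ref{algo_BW_rec}); the branching factor of the $\delta$-recursion is then at most $4$, and unrolling over the $\log n$ levels (the critical case of the master recurrence) yields the first bound. For $3/8\le\delta<9/16$ we have $a<3/8$, so $\mathfrak{C}(n/2,a)$ is bounded by the first-case estimate with $\delta$ replaced by $a$; substituting this back turns the branching factor $2(1+\aleph(a))$ into the exponent $1+\log_2(1+\aleph(2/3\,\delta))$ and produces the second bound.

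I expect the only genuine subtlety to be the correctness step, namely confirming that flipping the sort and prune operations and capping at the $\aleph(\delta)$ closest candidates leaves the recursive invariant intact; this is exactly where the hypothesis $\aleph(\delta)\ge l(n,\delta)$ enters. Once that invariant is secured, the complexity derivation is a mechanical substitution into an already-established recurrence, so the work is bookkeeping rather than a new argument, and the bounds follow by the same master-recurrence analysis used for Theorem~\ref{main_theo}.
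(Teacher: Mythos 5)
Your proposal is correct and follows essentially the same route as the paper, which gives no explicit proof and simply asserts that the result ``follows from Theorem~\ref{main_theo}'' by substituting the capped list sizes $\aleph(\cdot)$ for the bounds $l(\cdot,\cdot)$ in the recursion~\eqref{eq_main}. Your explicit observation that the completeness claim only holds under the hypothesis $\aleph(\delta)\ge l(n,\delta)$ at every recursion level is a point the paper leaves implicit, and worth making.
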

Note that on the Gaussian channel, the probability that $y$ is exactly between two lattice points is 0. As a result, we can assume that $l(n,1/4)=1$ and thus include $\delta=3/8$ in the first case of Theorem~\ref{theo_main_pract}.

As comparison, a similar modification of the algorithm of \cite{Grigorescu2017} would yield a complexity $O(n^2) \cdot \aleph(\delta)^2$. In the next section on simulations, we show that for $n=32$ and $n=64$, one should choose $\delta=3/8$ and $\aleph(\delta)$ should be at least 10 and 20, respectively. Moreover, for $n=128$ quasi-optimal performance is achieved for $\delta=1/2$ with $\aleph(\delta)=1000$ and $\aleph(2/3 \delta)=4$.
With these parameters, our algorithm has a clear advantage thanks to the quasi-linear dependence in $\aleph(\delta)$.  \\
\section{Numerical results}

\subsection{Performance of Algorithm~5}

Figure~\ref{fig_influ_list} shows the influence of the list size 
when decoding $BW_{64}$ 
using Algorithm~5 with $\delta=3/8$. 
On this figure we also plotted an estimate of the MLD performance of $BW_{64}$, 
obtained as $\tau(BW_{64})/2 \cdot \text{erfc}( \gamma/(8  \sigma^2_{max}/\sigma^2))$ \cite[Chap.~3]{Conway1999}.

Figure~\ref{fig_perf_list} depicts the performance 
of Algorithm~5 for the $BW$ lattices up to $n=128$
and the universal bounds provided in \cite{Tarokh1999} (see also \cite{Forney2000} or \cite{Ingber2013}, 
where it is called the sphere lower bound).
These universal bounds are limits on  the highest possible coding gain using $any$ lattice in $n$ dimensions.
For each $BW_n$ we tried to reduce as much as possible the list size while keeping quasi-MLD performance.
The choice of $\delta=3/8$ yields quasi-MLD performance up to $n=64$ with 
small list size and thus reasonable complexity.
This shows that $BW_{64}$, with Algorithm~5, is a good
candidate to design finite constellations in dimension 64.
However, for $n=128$ one needs to set $\delta=1/2$ and choose $\aleph(\delta)=1000$.
Nevertheless, $\aleph(2/3 \cdot \delta)$ can be as small as 4, which is still tractable. \\
\begin{figure}
\centering
\vspace{-5mm}
\includegraphics[width=0.8\columnwidth]{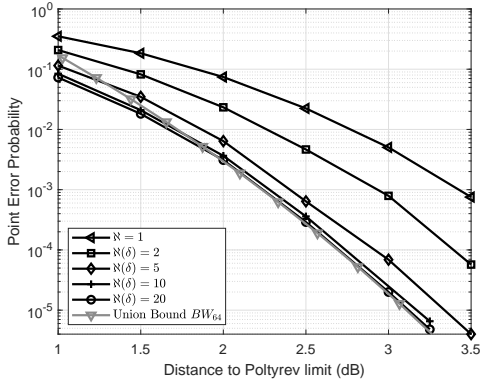}
\vspace{-2mm}
\caption{Influence of the list size when decoding $BW_{64}$ using Alg.~5, $\delta=3/8$.}
\label{fig_influ_list}
\end{figure}
\begin{figure}
\centering
\vspace{-4mm}
\includegraphics[width=0.8\columnwidth]{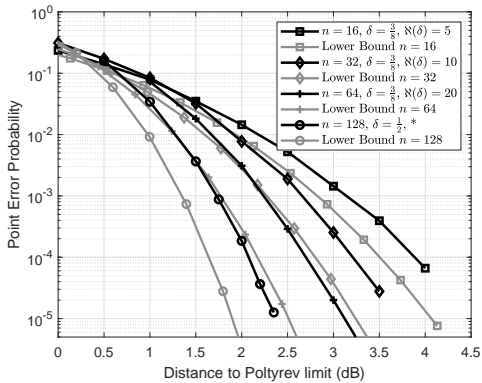}
\vspace{-2mm}
\caption{Algorithm~5 for the $BW$ lattices up to $n~=~128$
and the universal bounds of \cite{Tarokh1999}. $^*$For $n=128$, $\aleph(\delta)=1000$ and $\aleph(2/3 \delta)=4$.}
\label{fig_perf_list}
\vspace{-4mm}
\end{figure}
\indent We compare these performances with existing schemes at $P_e=10^{-5}$.
For fair comparison between the dimensions, we let $P_e$ be either the normalized error probability,
which is equal to the point error-rate divided by the dimension (as done in e.g. \cite{Tarokh1999}),
or the symbol error-rate.\\
First, several constructions have been proposed for block-lengths around $n=100$ in the literature.
In \cite{Matsumine2018} a two-level construction based on BCH codes with $n=128$ achieves
this error-rate at 2.4 dB. The decoding involves an OSD of order 4 with 1505883 candidates.
In \cite{Agrawal2000} the multilevel (non-lattice packing) $\mathcal{S}_{127}$ ($n=127$) has similar
performance but with much lower decoding complexity via generalized minimum distance decoding. 
In \cite{Sakzad2010} a turbo lattice with $n=102$ and in \cite{Sommer2008} a LDLC with $n=100$
achieve the error-rate with iterative methods at respectively 2.75 dB and 3.7 dB (unsurprisingly, these two schemes are efficient for larger block-lengths).
All these schemes are outperformed by $BW_{64}$, where $P_e=10^{-5}$ is reached at 2.3 dB.
Moreover, $BW_{128}$ has $P_e=10^{-5}$ at $1.7$ dB, which is similar to many schemes with block-length $n=1000$
such as the LDLC (1.5 dB) \cite{Sommer2008},  the turbo lattice (1.2 dB) \cite{Sakzad2010}, 
the polar lattice with $n=1024$ (1.45 dB) \cite{Yan2013}, and the LDA lattice (1.27 dB) \cite{diPietro2012}. 
This benchmark is summarized in Figure~\ref{fig_benchmark}.
\begin{figure}
\centering
\vspace{-5mm}
\includegraphics[width=0.8\columnwidth]{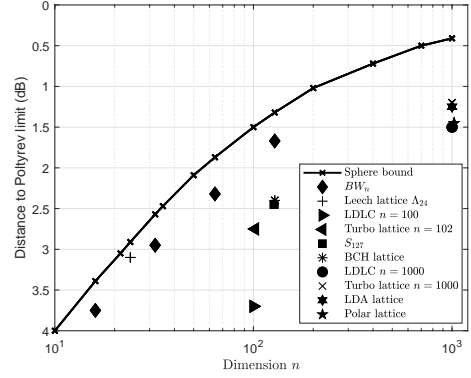}
\vspace{-2mm}
\caption{Perf. of different lattices for normalized error probability $P_e=10^{-5}$.}
\label{fig_benchmark}
\vspace{-4mm}
\end{figure}

\subsection{Performance of $BW$ finite constellations}
We uncover the performance of a Voronoi constellation \cite{Conway1983}\cite{Forney1989_2} 
based on the partition $BW_{64}/2^{\eta} BW_{64}$ via Monte Carlo simulation, 
where $\eta$ is the desired rate in bits per channel use (bpcu): i.e. both the coding lattice and the shaping
lattice are based on $BW_{64}$. It follows that the encoding complexity is the same as the
decoding complexity: the complexity of Algorithm~5 with $\delta=3/8$ and $\aleph(\delta)=20$.
Figure~\ref{fig_finite_constell} exhibits the performance of our scheme for $\eta=4$ bpcu.
In our simulation, the errors are counted on the uncoded symbols. 
The error-rate also includes potential errors due to incomplete encoding, 
which seem to be negligible compared to decoding errors.
Again, we plotted the best possible performance of $any$ lattice-based constellation in dimension 64 (obtained from \cite{Tarokh1999}).
The scheme performs within 0.7 dB of the bound.
\begin{figure}
\centering
\includegraphics[width=0.8\columnwidth]{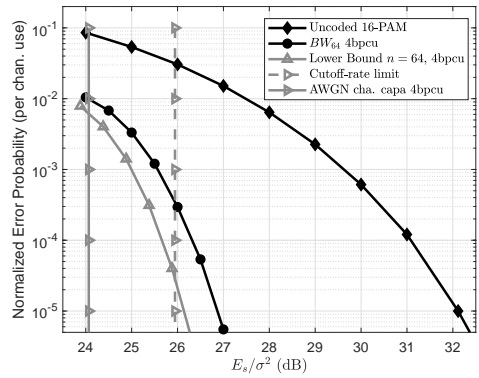}
\vspace{-2mm}
\caption{Performance of a Voronoi constellation based on the partition $BW_{64}/2^{4} BW_{64}$ where 
Algorithm~5, with $\delta=3/8$ and $\aleph(\delta)=20$, is used for encoding and decoding.
The cutoff-rate limit is 1.7+0.179 dB right to Shannon limit (coding + shaping loss for $n=64$)\cite{Forney1998}.}
\label{fig_finite_constell}
\vspace{-4mm}
\end{figure}
\vspace{-1mm}
\section{Conclusions}
\vspace{-1mm}
Our recursive paradigm can be seen as a tree search algorithm
and our decoders fall therefore in the class of sequential decoders.
While the complexity of Algorithm~5 remains stable and low for $n\le 64$,
there is a significant increase for $n=128$ 
and it becomes intractable for $n=256$ due to larger lists. 
This is not surprising from the cut-off rate perspective \cite{Forney1998};
For $n=64$ the MLD is still at a distance of $1$ dB from this limit 
(Figure~\ref{fig_finite_constell}), but it is very close to the limit for $n=128$
and potentially better at larger $n$. 
One should not expect to perform quasi-MLD of these lattices
with any sequential decoder.
This raises the following open problem: 
can we decode lattices beyond the cut-off rate in non-asymptotic dimensions, i.e. $n < 300$, where classical
capacity-approaching decoding techniques (e.g. BP) cannot~be~used?

\section{Appendix}
\subsection{Analysis of the effective error coefficient}
\label{app_1}
Let us define the decision region of a BDD algorithm $\mathbb{R}_{BDD}(\textbf{0})$ as 
the set of all points of the space that are decoded to \textbf{0} by the algorithm.
The number of points at distance $\rho(\Lambda)$ from the origin that are not necessarily 
decoded to $\textbf{0}$  are called boundary point of $\mathbb{R}_{BDD}(\textbf{0})$
The number of such points is called effective error coefficient of the algorithm. 
The performance of BDD algorithms are usually estimated via  
this effective error coefficient \cite{Forney1996}\cite{Salomon2006}.
Indeed, BDD up to the packing radius achieves the best possible error exponent on the Gaussian channel,
but the performance might be significantly degraded, 
compared to MLD, due to a high effective error coefficient. \\
In \cite{Micciancio2008}, the error coefficient of the parallel decoder is not computed 
and the performance of the algorithm is not assessed on the Gaussian channel.
The following analysis of Algorithm~\ref{algo_BW_rec} is also valid for the parallel decoder \cite{Micciancio2008}.
Let us express the point to be decoded as $y=x+\eta$, where $x \in BW_{n}$ and $\eta$ is a noise pattern.
Scale $BW_{n}$ such that its packing radius is 1. It is easily seen that
any $\eta$ of the form $(\pm \frac{1}{\sqrt{2^t}}^{2^t})=(\pm \frac{1}{\sqrt{2^t}}, ... , \pm \frac{1}{\sqrt{2^t}})$, $t=\log_{2}(n)$, is on
the boundary of $\mathbb{R}_{BDD}(\textbf{0})$.
The number of such noise patterns is $2^{2^t}=2^{n}$. 
According to Forney's rule of thumb, every factor-of-two increase in the number
of nearest neighbor results in a $0.2$ dB loss in effective coding gain \cite{Forney1998}.
Since the kissing number of $BW_n$ is 
$\prod_{i=1}^t (2^i+2) \approx 4.768 ...  \cdot 2^{0.5 log_{2} n (log_2 n +1)}$ \cite{Conway1999}, 
to be compared to the above number of noise patterns $2^n$,
we see that the loss in performance compared to MLD (in dB) is expected to grow as $\approx 0.2n$. 
However, this rule holds only if the effective error coefficient is not too large and the performance of Algorithm~\ref{algo_BW_rec} is not as bad in practice.
Nevertheless, this analysis hints that one should expect the performance of this BDD to degrade as $n$ increases. 

\subsection{The Merge Sort Algorithm}
\label{app_2}

Let $l^{k,n}=(x_1^n, x_{2}^n ..., x_{k}^n)$ be a list of $k$ elements $x$ of dimension $n$ (assume for the sake of simplicity that $k$ is a power of 2).
This list can be split into two lists of equal size $l_1^{k/2,n}$ and $l_2^{k/2,n}$ and we write
$l^{k,n}=(l_1^{k/2,n},l_2^{k/2,n})$. \\
Then, we define the function $Merge$ as a function that takes two sorted lists of $k$ elements $x$ of dimension $n$ as input (as well as $k$ and $n$) and returns a unique sorted list of the $2k$ elements. There exists several variants of this function, but the complexity is always $O(n \cdot k)$.
\begin{algorithm}[H]
\caption{Merge Sort Algorithm}
\textbf{Function:} $MS(l^{k,n},k,n)$ \\
\textbf{Input: $l^{k,n}=(l_1^{k/2,n},l_2^{k/2,n})$, $k\geq1 $, $ n~\geq~1$.} 
\begin{algorithmic}[1]
\IF{$k=1$}
\STATE \textbf{Return} $l^{k,n}$.
\ELSE 
\STATE  \textbf{Return} $Merge(MS (l_1^{k/2,n},\frac{k}{2},n),M(l_2^{k/2,n},\frac{k}{2},n),k,n)$
\ENDIF
\end{algorithmic}
\normalsize
\end{algorithm}

Let $\mathfrak{C}(k,n)$ be the complexity of the $MS$ function (Algorithm~6).
The complexity of this algorithm is 
$\mathfrak{C}(k,n) =2\mathfrak{C}(k/2,n) +O(k~\cdot~n) = O(k \log(k) \cdot n)$
\vspace{-1mm}



%
%
%



\begin{thebibliography}{99}
\vspace{-1mm}
\bibitem{Agrawal2000} D.~Agrawal and A.~Vardy, ``Generalized minimum distance decoding
in euclidean space: Performance analysis," \emph{IEEE Trans. Inform. Theory},
vol.~46, pp.~60–83, 2000.
\bibitem{Barnes1959} E.~S.~Barnes and G.~E.~Wall, ``Some extreme forms defined in terms of Abelian groups," 
J. Australian Math. SOC., vol.~1, pp.~47-63, 1959.
\bibitem{Bollauf2019} M.~F.~Bollauf, R.~Zamir, S.~I.~R.~Costa, ``Multilevel Constructions: Coding, Packing and Geometric Uniformity," 
 \emph{IEEE Trans. on Inform. Theory}, Vol.~65, 2019.
\bibitem{Conway1983} J.~Conway and N.~Sloane, ``A fast encoding method for
lattice codes and quantizers," \emph{IEEE Trans. Inform. Theory}, vol.~19, pp.~820-824, 1983.
\bibitem{Conway1999} J.~Conway and N.~J.~A.~Sloane. {\em Sphere packings, lattices and groups}.
Springer-Verlag, New York, 3rd edition, 1999.
\bibitem{diPietro2012} N.~di Pietro, J.~J.~Boutros, G.~Zémor, L.~Brunel, ``Integer Low-Density Lattices based on Construction A", \emph{2012 IEEE Information Theory Workshop}, 2012.
\bibitem{Dumer2006} I.~Dumer and K.~Shabunov, ``Soft-decision decoding of Reed-Muller codes: recursive lists,"
\emph{IEEE Trans. Inform. Theory}, vol. 52, pp. 1260-1266, 2006.
\bibitem{Elkies2001} N.~D.~Elkies, ``Mordell-Weil lattices in characteristic 2: III. A Mordell-Weil lattice of rank 128," 
\emph{Experimental Math.}, vol.~3, pp.~467-473, 2001.
\bibitem{Forney1988} G.~D.~Forney, Jr., ``Coset codes II: Binary lattices and related codes,"  \emph{IEEE Trans. Inform. Theory}, vol.~34, pp.~1152-1187, 1988.
\bibitem{Forney1989_2} G.~D.~Forney, Jr., ``Multidimensional Constellations - part II: Voronoi Constellations," 
\emph{IEEE J. Select. Areas Com.}, vol.~7, pp.~941-958, 1989.
\bibitem{Forney1996} G.~D.~Forney, Jr. and A.~Vardy, ``Generalized Minimum-Distance decoding of Euclidean-Space Codes and Lattices," \emph{IEEE J. Select. Areas Com.}, vol.~42, pp.~1992-2026, 1996.
\bibitem{Forney1998} G.~D.~Forney and G.~Ungerboeck, ``Modulation and Coding for Linear Gaussian Channels,"  \emph{IEEE Trans. Inform. Theory}, vol.~44, pp.~2384-2415, 1998.
\bibitem{Forney2000} G.~D.~Forney, M.~D.~Trott, and S.~Chung, ``Sphere-Bound-Achieving Coset Codes and Multilevel Coset Codes," \emph{IEEE Trans. Inform. Theory}, vol.~46, pp.~820-850, 2000.
\bibitem{Grigorescu2017} E.~Grigorescu and C.~Peikert, ``List-Decoding Barnes-Wall Lattices," \emph{Computational Complexity}, vol.~26, pp~365-39, 2017.
\bibitem{Harsham2013} J.~Harshan, E.~Viterbo, and J.-C.~.~Belfiore, ``Practical Encoders and Decoders for Euclidean Codes from Barnes-Wall Lattices," 
\emph{IEEE Trans. Communications}, vol.~61, pp.~4417-4427, 2013.
\bibitem{Ingber2013} A.~Ingber, R.~Zamir, and M.~Feder, ``Finite-Dimensional Infinite Constellations,"
 \emph{IEEE Trans. Inform. Theory}, vol.~59, pp.~1630-1656, 2013.
\bibitem{MacWilliams1977} F.~J.~MacWilliams and N.~J.~A.~Sloane, {\em The Theory of Error-Correcting Codes}. Amsterdam, The Netherlands: North-Holland, 1977.
\bibitem{Matsumine2018} T.~Matsumine, B.~M.~Kurkoski, and H.~Ochiai,
``Construction D Lattice Decoding and Its Application to BCH Code Lattices," \emph{2018 IEEE Global Communications Conference (GLOBECOM)}, 2018.
\bibitem{Micciancio2008} D.~Micciancio and A.~Nicolosi, ``Efficient Bounded Distance Decoders for Barnes-Wall Lattices," \emph{2008 IEEE Int. Symp. Inform. Theory}, 2008.
\bibitem{Nebe2012} G.~Nebe, ``An even unimodular 72-dimensional lattice of minimum 8,"  \emph{J. Reine Angew. Math.}, vol.~673, pp.~237-247, 2012.
\bibitem{Plotkin1960} M.~Plotkin, ``Binary codes with specified minimum distances," 
\emph{IEEE Trans. Inform. Theory}, vol.~6, pp.~445-450, 1960.
\bibitem{Poltyrev1994} G.~Poltyrev, ``On coding without restrictions for the AWGN channel,"
\emph{IEEE Trans. Inform. Theory}, vol.~40, pp.~409-417, 1994.
\bibitem{Sakzad2010} A.~Sakzad, M.~Sadeghi, and D.~Panario, “Turbo Lattices: Construction and Performance
Analysis,” arXiv preprint arXiv:1108.1873, 2011.
\emph{48th Annual Allerton Conf. on Com., Control, and Computing} , 2011.
\bibitem{Salomon2006} A.~J.~Salomon and O.~Amrani, ``Encoding and Decoding Binary Product Lattices," \emph{IEEE Trans. Inform. Theory}, vol~52, pp~5485-5495, 2006.
\bibitem{Sommer2008} N.~Sommer, M.~Feder, and O.~Shalvi, ``Low-density Lattice Codes," 
\emph{IEEE Trans. Inform. Theory}, vol.~54, pp.~1561-1585, 2008.
\bibitem{Schnabl1995} G.~Schnabl and M.~Bossert, ``Soft-decision decoding of Reed-Muller codes as generalized multiple concatenated codes," \emph{IEEE Trans. on Inform. Theory}, vol.~41, pp.~304-308, 1995.
\bibitem{Tarokh1999} V.~Tarokh, A.~Vardy, and K.~Zeger, ``Universal bound on the performance of lattice codes,"  \emph{IEEE Trans. Inform. Theory}, vol.~45, pp.~670-681, 1999.
\bibitem{Yan2013} Y.~Yan, C.~Ling, X.~Wu, ``Polar lattices: Where Arıkan meets Forney," 
\emph{2013 IEEE Int. Symp. Inform. Theory}, 20013.

\end{thebibliography}
\end{document}